\newtheorem{problem}{Problem} 
\newtheorem{definition}{Definition} 
\newtheorem{theorem}{Theorem} 
\newtheorem{proposition}{Proposition} 
\newtheorem{lemma}{Lemma} 
\newcommand{\domain}[0]{X_{\text{safe}}}
\newcommand{\region}[1]{\mathfrak{r}_{#1}}
\newcommand{\cov}[1]{\text{Cov}_{\bm{#1}}}
\newcommand{\nn}[1]{
\ifthenelse{\equal{#1}{}}{f^{w}_a}{f^{w}_a(#1)}}
\newcommand{\nnT}[1]{\mathcal{T}f_a^{w}(\mathcal{T}^{-1}#1)}
\newcommand{\tk}[1]{T^a(#1|x)}
\newcommand{\x}[1]{\bm{x}_{#1}}
\newcommand{\noise}[1]{\bm{v}_{#1}}
\newcommand{\normaldist}[3]{\mathcal{N}(#1\mid #2, #3)}
\newcommand{\pathx}[1]{\omega^{#1}_{\bm{x}}}
\newcommand{\pathsx}[0]{\Omega_{\bm{x}}}
\newcommand{\finitepathsx}[0]{\pathsx^{\text{fin}}}
\newcommand{\strategyx}[0]{\pi_{\bm{x}}}
\newcommand{\strategiesx}[0]{\Pi_{\bm{x}}}
\newcommand{\prop}[1]{\mathfrak{p}_{#1}}
\newcommand{\argmax}[1]{\underset{#1}{\text{arg max}\ }}
\newcommand{\argmin}[1]{\underset{#1}{\text{arg min}\ }}
\newcommand{\erf}[1]{\text{erf}\left(#1\right)}
\newcommand{\g}[0]{g(z)}
\newcommand{\T}[0]{\mathcal{T}}
\newcommand{\lbf}[0]{\check{f}}
\newcommand{\ubf}[0]{\hat{f}}
\newcommand{\post}[1]{Post(#1,\mathcal{T},a)}
\newcommand{\image}[1]{Im(#1, \mathcal{T})}
\newcommand{\conv}[1]{conv(#1)}
\newcommand{\rect}[1]{rect(\check{#1},\hat{#1})}
\newcommand{\vertices}[1]{vert(#1)}
\newcommand{\postapprox}[1]{\widetilde{Post}(#1,\mathcal{T},a)}
\newcommand{\rectf}[1]{rect(\lbf(#1),\ubf(#1))}
\newcommand{\rectregion}[1]{rect(#1)}
\newcommand{\rectv}[2]{rect(#1,#2)}
\newcommand{\imdp}[0]{\mathcal{I}}
\newcommand{\Q}[0]{Q_s}
\newcommand{\pathI}[0]{\omega_{\mathcal{I}}}
\newcommand{\finitepathI}[0]{\omega^{\text{fin}}_{\mathcal{I}}}
\newcommand{\pathsI}[0]{\Omega_{\mathcal{I}}}
\newcommand{\finitepathsI}[0]{\pathsI^{\text{fin}}}
\newcommand{\strategyI}[0]{\pi_{\mathcal{I}}}
\newcommand{\strategiesI}[0]{\Pi_{\mathcal{I}}}
\newcommand{\adversary}[0]{\xi}
\newcommand{\adversaries}[0]{\Xi}
\newcommand{\propositionsI}[0]{AP_{\imdp}}
\newcommand{\LabelingI}[0]{L_{\imdp}}
\newcommand{\imdpdef}[0]{\imdp=(Q,A,\check{P}, \hat{P}, \propositionsI, \LabelingI)}
\newcommand{\uP}[2]{\hat{P}(#1,a,#2)}
\newcommand{\lP}[2]{\check{P}(#1,a,#2)}
\newcommand{\satlp}[1]{\check{p}(#1)}
\newcommand{\satup}[1]{\hat{p}(#1)}
\newcommand{\mappingpoint}[0]{M_{\bm{x}}}
\newcommand{\mappingpath}[0]{M_{\omega}}
\title{\LARGE \bf
Formal Control Synthesis for Stochastic Neural Network Dynamic Models
}
\author{Steven Adams$^{1}$,  Morteza Lahijanian$^{2}$, and Luca Laurenti$^{1}$
\thanks{$^{1}$Delft Center for Systems and Control,  TU Delft}
\thanks{$^{2}$Dept. of Aerospace Eng. Sciences and Computer Science, CU Boulder}
}
\begin{document}
\bstctlcite{IEEEexample:BSTcontrol}
\setboolean{arxiv}{true}

\maketitle
\thispagestyle{empty}
\pagestyle{empty}


\maketitle

\begin{abstract}
Neural networks (NNs) are emerging as powerful tools to represent the dynamics of control systems with complicated physics or black-box components.  Due to complexity of NNs, however, existing methods are unable to synthesize complex behaviors with guarantees for \textit{NN dynamic models} (NNDMs).
This work introduces a control synthesis framework for stochastic NNDMs with performance guarantees.  The focus is on specifications expressed in \textit{linear temporal logic interpreted over finite traces} (LTLf), and the approach is based on finite abstraction.  Specifically, we leverage recent techniques for convex relaxation of NNs to formally abstract a NNDM into an \textit{interval Markov decision process} (IMDP). Then, a strategy that maximizes the probability of satisfying a given 
specification is synthesized over the IMDP and mapped back to the underlying NNDM.  
We show that the process of abstracting NNDMs to IMDPs reduces to a set of convex optimization problems, hence guaranteeing efficiency. 
We also present an adaptive refinement procedure that makes the framework scalable. 
On several case studies, we illustrate the our framework is able to provide non-trivial guarantees of correctness for NNDMs with architectures of up to 5 hidden layers and hundreds of neurons per layer.

\end{abstract}



\section{Introduction}\label{sec:introduction}


\noindent
Autonomous systems are becoming increasingly complex, often including black-box components and performing complex tasks in the presence of uncertainty. In this context, because of their data efficiency and representation power, deep neural networks (NNs) 
can be a transformative technology: NNs have already achieved state-of-the-art performance to model and control dynamical systems in various fields, including reinforcement learning (RL) \cite{nagabandi2018neural}. 
However, employing NNs in \emph{safety-critical} applications, such as UAVs, 
where failures may have catastrophic effects,
remains a major challenge due to limitations of existing methods to provide performance guarantees. 
This work focuses on this challenge and 
develop a correct-by-construction synthesis framework for systems 
with NN dynamic models.


To achieve complex behaviors with strong guarantees, formal synthesis for control systems have been well-studied in recent years \cite{Tabuada2009VerificationApproach,Lahijanian2015FormalSystems,Doyen2018VerificationSystems}. 
These methods use expressive formal languages such as \textit{linear temporal logic} with infinite (LTL) \cite{baier2008principles} or \textit{finite} (LTLf) \cite{LTLf} interpretation over traces, to specify complex behaviors, and apply rigorous techniques to abstract the dynamics to  finite (Kripke) models.  Then, by utilizing model-checking-like algorithms on the abstraction, they synthesize controllers that achieve the specification.  
The key step in these methods is the abstraction construction, which often relies on (simple) analytical models. For modern systems, however, such models are often unavailable due to, e.g., complexity of the physics or black-box components.


To describe complex  dynamical dynamical systems, 
 NNs have been already used with success \cite{raissi2019physics,gamboa2017deep}.
Furthermore,
 the ability of \textit{NN dynamic models} (NNDMs) to predict complex dynamics has also been employed to enhance controller training in RL frameworks \cite{nagabandi2018neural,chua2018deep}. In these works, a NN model of the system is trained in closed-loop with a NN controllers, which can be concatenated in a  single NN representing the dynamics of the closed-loop system.  
These benefits have motivated the recent development of methods for formal analysis of NNDM  properties~\cite{Wei2021SafeModels,wicker2021certification,fazlyab2021introduction},
extending  verification algorithms for NNs \cite{Xu2020AutomaticBeyond} to support temporal properties.  
Nevertheless, these methods are limited to simple safety properties and often neglect noise in the dynamics. As a consequence, the state-of-the-art techniques for NNDM are still unable to achieve complex behaviors with guarantees. 


In this work, we close the gap by introducing a control synthesis framework for stochastic NNDMs to achieve a complex specification with formal guarantees. Our approach is based on finite abstraction, and we use 
LTLf
as the specification language which has the same expressively as LTL, but specifies finite behaviors, making it an appropriate language for stochastic models. 
In particular, we leverage recent convex relaxation techniques for NNs \cite{Xu2020AutomaticBeyond} to build piece-wise linear functions that under- and overapproximate the NNDM and construct the abstraction as an interval Markov decision process (IMDP) \cite{IMDP}.
Critically, we show that this discretization-based method only requires solving a set of convex optimization problems, which can be reduced to evaluation of an analytical function on a finite set of points, resulting in efficient abstraction procedure. 
Then, we use existing tools to synthesize a control strategy that optimizes the probability of satisfying a given specification while guaranteeing robustness against uncertainties due to dynamics approximation and discretization. To ensure scalability, we present an adaptive refinement algorithm that iteratively reduces uncertainty in a targeted manner. 
Finally, we illustrate the effecacy of our framework in several case studies.

In summary, the contributions of this paper are:
(i) a novel framework for formal synthesis for stochastic NNDMs with complex specifications,
(ii) an efficient finite abstraction technique for NNDMs,
(iii) an adaptive refinement algorithm for uncertainty reduction, and
(iv) illustration of the efficacy and scalability of the framework on a set of rich case studies with complex NNDMs, whose architecture include up to five hidden layers and hundreds of neuron per layer.



\section{Problem Formulation}


We consider the following stochastic \emph{neural network dynamic model (NNDM)}: 
\begin{equation} 
    \label{eq: process}
    \x{k+1} = \nn{\x{k}} + \noise{k},  \quad \noise{k} \sim \mathcal{N}(0,\cov{v}),
\end{equation}
where $k\in\mathbb{N}$, $\x{k}, \noise{k} \in \mathbb{R}^n$ and $a \in A = \{a_1,\hdots,a_m\}$ is a finite set of actions. For every $a$, $\nn{}:\mathbb{R}^n\rightarrow \mathbb{R}^n$ is a (trained) 
feed-forward NN with ReLU, sigmoid or tanh activation functions, where $w$ denotes the maximum likelihood weights.
The noise term $\noise{k}$ is a random variable with stationary 
Gaussian distribution 
with zero mean and covariance $\cov{v}\in\mathbb{R}^{n\times n}$. 
Intuitively, $\x{k}$ is a discrete-time stochastic process whose time evolution is given by iterative predictions of various NNs.  
We remark that models such as Process \eqref{eq: process} are increasingly employed in both robotics and biological systems for both model representation and NN controller training with, e.g., state-of-the-art model-based RL techniques \cite{nagabandi2018neural,Zhao2020LearningCertificates,raissi2019physics}. For instance, Process \eqref{eq: process} can represent a NN model in closed loop with different (possibly NN) feedback controllers, and the role of actions $a$ is to switch between different controllers.

Let $\pathx{N}=x_0\xrightarrow{a_0}x_1\xrightarrow{a_1}\hdots \xrightarrow{a_{N-1}}x_N$ be a finite path of Process \eqref{eq: process} of length $N\in \mathbb{N}$ and $\finitepathsx$ be the set of all finite paths.
Paths of infinite length and the set of all paths of infinite lengths are denoted by $\pathx{}$ and $\pathsx$, respectively, with $\pathx{}(k)$ denoting the state of $\pathx{}$ at time $k$.
Given a finite path, a \textit{switching strategy} $\strategyx:\finitepathsx\rightarrow A$ chooses the next action of Process \eqref{eq: process}. The set of all switching strategies is denoted by $\strategiesx$.  
For $a\in A$, $X\subseteq \mathbb{R}^n$, and $x\in \mathbb{R}^n$, we call 
\begin{equation}\label{eq: transition kernel}
    \tk{X}=\int_X \normaldist{\bar{x}}{\nn{x}}{\cov{v}}d\bar{x}
\end{equation}
the \textit{transition kernel} of Process \eqref{eq: process} under action $a$, where $\normaldist{\cdot}{\nn{x}}{\cov{v}}$ is a normal distribution with mean $\nn{x}$ and covariance $\cov{v}$.
For a strategy $\strategyx$, Process \eqref{eq: process} defines a probability measure $P$ which is uniquely defined by $T^a$ and by the initial conditions \cite{Bertsekas1996StochasticCase} s.t. 
for every $k>0$,
\begin{equation*}
    P[\pathx{}(k+1)\in X \mid \pathx{}(k)=x,\strategyx(\pathx{}(k)) = a]=\tk{X}.
\end{equation*}

We are interested in the behavior of Process \eqref{eq: process} in compact set $\domain\subset\mathbb{R}^n$ with respect to the regions of interest in $R=\{\region{1}, \hdots,\region{|R|}\}$, where $\region{i} \subseteq \domain$. To define properties over $R$, we associate to each region $\region{i}$ the atomic proposition $\prop{i}$ such that $\prop{i}$ is true iff 
$x\in \region{i}$. The set of atomic propositions is given by $AP$, and the labeling function $L:X\rightarrow 2^{AP}$ returns the set of atomic propositions that are true at each state. Then, we define the observation of path $\pathx{N}$ to be $\rho=\rho_0\rho_1\hdots\rho_N$, where $\rho_i=L(\pathx{N}(i))$ for all $i\leq N$. 

To express the temporal properties of Process \eqref{eq: process}, we consider 
LTLf, which is an expressive language to specify finite behaviors, and hence, appropriate for stochastic systems.

\begin{definition}
An LTLf formula is built from a set of propositional symbols $AP$ and is closed under the boolean connectives as well as 
the ``until" operator $\mathcal{U}$, and the temporal ``eventually'' $\mathcal{F}$ and ``globally'' $\mathcal{G}$ operators:
\begin{equation*}
    \phi \coloneqq \top \mid \prop{} \mid \lnot \phi \mid \phi_1 \wedge \phi_2 \mid \phi_1\mathcal{U}\phi_2 \mid \mathcal{F}\phi \mid \mathcal{G} \phi
\end{equation*}
where 
$\prop{}\in AP$. 
\end{definition}
The semantics of LTLf can be found in \cite{LTLf}. We say a path $\pathx{}$ satisfies $\phi$, denoted by $\pathx{} \models \phi$, if a prefix of its observation satisfies $\phi$ \cite{ltlf_morteza}.

\begin{problem}[Control synthesis]\label{prob: synthesis}
Given a NNDM as defined in Process \eqref{eq: process}, a compact set $\domain$, and an LTLf formula $\phi$ defined over the regions of interest in $\domain$, find a switching strategy $\strategyx^*$ that maximizes the probability that a path $\pathx{} \in \pathsx$ of Process \eqref{eq: process} satisfies  $\phi$ while remaining in $\domain.$
\end{problem}
To solve Problem \ref{prob: synthesis}, we abstract Process \eqref{eq: process} into a finite Markov model, where the stochastic nature of Process \eqref{eq: process} and the error corresponding to 
the discretization of the space 
are formally modelled as uncertainties. The abstracting process involves the computation of bounds on the transition probabilities between different regions of the state space. 
In section \ref{sec: computation bounds by crown}, we show that by using linear functions that locally under and overapproximate the NN-dynamics, these bounds can be efficiently computed by solving convex optimization problems. 
For the resulting Markov model, we then synthesize a strategy that maximizes the probability that the paths of the Markov model satisfy $\phi$ and can be mapped onto Process \eqref{eq: process}. 
Finally, in Section \ref{sec: synthesis driven refinement} 
we develop a refinement scheme that iteratively builds a finer abstraction based on the synthesis results by reducing the conservatism induced by the approximation bounds of $\nn{}$.

\section{Preliminaries}
\paragraph*{Notation}
We denote by $x^{(l)}$ the $l$-th element of vector $x\in\mathbb{R}^n$.
Further, for convex region $X\subset \mathbb{R}^n$, we denote by $X^{[l]}\subset \mathbb{R}$ the interval of values of $X$ in the $l$-th dimension, i.e., $
    X^{[l]} = \{x^{(l)} \mid x\in X\}$.
Given a linear transformation function (matrix) $\T\in \mathbb{R}^{n \times n}$, the image of region $X\subset\mathbb{R}^n$ under $\T$ is defined as $
    \image{X} = \{\T x \mid x\in X\}.$
The post image of region $X$ under action $a$ of Process~\eqref{eq: process} and $\T$ is defined as $
    Post(X,\T,a) = \{\T\nn{x} \mid x\in X\}.$ 

For vectors $x,x'\in\mathbb{R}^n$, we denote by $\rectv{x}{x'}$ the axis-aligned hyper-rectangle that is defined by the intervals $r_1 \times r_2 \times \hdots \times r_n$, where $
 r_l = [\min(x^{(l)}, x'^{(l)}), \max(x^{(l)}, x'^{(l)})]$. 
In addition, for region $X\subset \mathbb{R}^n$, we denote by $\rectregion{X}$ the hyper-rectangular overapproximation of $X$, i.e., $
    \rectregion{X} = \rect{x}, $
where $\check{x}^{(l)}=\inf(X^{[l]})$ and $\hat{x}^{(l)}=\sup(X^{[l]})$ for every $l\in \{0,\hdots,n\}$. 
Lastly, we define a \textit{proper linear transformation function} as follows.  
\begin{definition}
For $X \subset \mathbb{R}^n$, the transformation matrix $\T \in \mathbb{R}^{n\times n}$ is proper w.r.t. $X$ if $Im(X,\T)$ is an axis-aligned hyper-rectangle.
\end{definition}
\noindent
Note that, as $\T$ is a linear transformation, $X$ must necessarily be a convex polytope in order for $\T$ to be proper \cite{rockafellar2015convex}.

\paragraph*{Interval Markov Decision Processes}
We utilize Interval Markov Decision Processes (IMDP), also called Bounded MDPs \cite{IMDP}, 
to abstract Process \eqref{eq: process}. IMDPs are a generalized class of MDPs
that allows for a range of transition probabilities between states.

\begin{definition}
An \textit{interval Markov decision process} (IMDP) is a tuple $\imdpdef$, where
\begin{itemize}
    \item $Q$ is a finite set of states, 
    \item $A$ is a finite set of actions available in each state $q\in Q$.
    \item $\check{P}:Q\times A \times Q\rightarrow  [0,1]$ is a function, where $\lP{q}{q'}$ defines the lower bound of the transition probability from state $q\in Q$ to state $q'\in Q$ under action $a\in A$,
    \item $\hat{P}:Q\times A \times Q \rightarrow [0,1]$ is a function, where $\uP{q}{q'}$ defines the upper bound of the transition probability from state $q\in Q$ to state $q'\in Q$ under action $a\in A$.
    \item $\propositionsI$ is a finite set of atomic propositions,
    \item $\LabelingI: Q\rightarrow 2^{AP}$ is a labeling function assigning to each state $q\in Q$ a subset of $\propositionsI$.
\end{itemize}
\end{definition}

For all $q,q' \in Q$ and $a \in A$, it holds that $\lP{q}{q'}\leq\uP{q}{q'}$ and $\sum_{q'\in Q}\lP{q}{q'}\leq 1 \leq \sum_{q' \in Q}\uP{q}{q'}$. 
A path $\pathI$ of an IMDP is a sequence of states $\pathI=q_0 \xrightarrow{a_0} q_1 \xrightarrow{a_1} q_2 \xrightarrow{a_2} \hdots$ such that $\uP{q_k}{q_{k+1}}>0$ for all $k\in \mathbb{N}$. We denote the last state of a finite path $\finitepathI$ by $last(\finitepathI)$ and the set of all finite and infinite paths by $\finitepathsI$ and $\pathsI$, respectively.  
A strategy of an IMDP $\strategyI: \finitepathsI \rightarrow A$ maps a finite path $\finitepathI\in\finitepathsI$ of $\imdp$ onto an action in $A$. The set of all strategies is denoted by $\strategiesI$. 
Let $\mathcal{D}(Q)$ denote the set of discrete probability distributions over $Q$. Given a strategy $\strategyI$, the IMDP reduces to a set of infinitely many Markov chains defined by the transition probability bounds of the IMDP. An adversary chooses a feasible distribution from this set at each state and reduces the IMDP to a Markov chain. 

\begin{definition}
For an IMDP $\imdp$, an adversary is a function $\adversary: \finitepathsI \times A \rightarrow \mathcal{D}(Q)$ 
that, for each finite path $\finitepathI \in \finitepathsI$, state $q = last(\finitepathI)$, and action $a\in A$, assigns a feasible distribution $\gamma_q^a$ which satisfies $\lP{q}{q'} \leq \gamma_q^a(q') \leq \uP{q}{q'}$. The set of all adversaries is denoted by $\adversaries$.
\end{definition}


\section{IMDP Abstraction}\label{sec: imdp abstraction}
In order to solve Problem \ref{prob: synthesis}, we first abstract Process \eqref{eq: process} into an IMDP $\imdpdef$. 
To do that, similarly as in \cite{cauchi2019efficiency}, we discretize $\domain$ in such way that the transition kernel in \eqref{eq: transition kernel} can be computed analytically. 
Let $\T$ be the Mahalanobis transformation $\T= \Lambda^{-\frac{1}{2}}\mathcal{V}^T$, where $\Lambda = \mathcal{V}^T\cov{v}\mathcal{V}$ is a diagonal matrix whose entries are the eigenvalues of $\cov{v}$, and $\mathcal{V}$ is the corresponding orthogonal (eigenvector) matrix. Then, the distribution of $\T \x{k+1}$ given $\x{k}=x$ under action $a$ becomes $\normaldist{\cdot}{\T \nn{x}}{I}$, where $I$ is the identity matrix. Consequently, given a region $X \subset \mathbb{R}^n$ 
for which $\T$ is a proper transformation (i.e., $\image{X}=\rect{x}$),
we obtain that $\tk{X}=g(\T \nn{x})$, where
\begin{equation}\label{eq: g}
\begin{aligned}
    \g =
    \frac{1}{2^n} \prod_{l=1}^n \left(\text{erf}\left(\frac{z^{(l)}-\check{x}^{(l)}}{\sqrt{2}}  \right)-\text{erf}\left(\frac{z^{(l)}-\hat{x}^{(l)}}{\sqrt{2}}\right)\right),
\end{aligned}
\end{equation}
and $\erf{\cdot}$ is the error function.
Hence, we discretize $\domain$ by using a grid in $\image{\domain}$, and denote by $Q_s=\{q_1,\hdots, q_{|Q_s|}\}$ the resulting set of regions. To each cell $q_i$, we associate a state of the IMDP $\imdp$. We overload the notation by using $q_i$ for both a region in $\domain$ and a state of $\imdp$. Then, the set of states of $\imdp$ is defined as $Q=Q_s \cup \{q_u\}$, where $q_u$ denotes the remainder of the state space, i.e. $\mathbb{R}^n \setminus \domain$.

We define the set of actions of $\imdp$ to be the set of actions $A$ of Process \eqref{eq: process}. To ensure a correct abstraction of Process \eqref{eq: process}, we assume a discretization of $\domain$ that respects the regions of interest in $R$, i.e., $\forall r \in R, \exists Q_r \subseteq Q$ such that $\cup_{q\in Q_r} q=r$. Under this assumption, the set of atomic propositions $\propositionsI$ is equal to $AP$. We define the labeling function $\LabelingI$ with $\LabelingI(q)=L(x)$ for any choice of $x\in q$. 

To compute the transition probability bounds $\check{P}$ and $\hat{P}$ for all $q,q'\in \Q$ and $a\in A$, we need to derive the following bounds, which are the subject of Section \ref{sec: computation bounds by crown} and \ref{sec: efficient computation}:
\begin{align}\label{eq: optim problms trans kern}
    \lP{q}{q'} \leq \min_{x\in q} \tk{q'},\   
    \uP{q}{q'} \geq \max_{x\in q} \tk{q'} 
\end{align}
The probability interval for transitioning to the state $q_u \in Q$, i.e., the region outside of $\domain$, is given by
\begin{align*}
    \lP{q}{q_u} &\leq 1- \max_{x\in q} \tk{q_u}\\
    \uP{q}{q_u} &\geq 1- \min_{x\in q} \tk{q_u},
\end{align*}
for all $a \in A$ and $q \in \Q$. Since we are not interested in the behavior of Process \eqref{eq: process} outside $\domain$, we make $q_u$ absorbing, i.e. $\lP{q_u}{q_u}=\uP{q_u}{q_u}=1$ for all $a\in A$.

\subsection{Transition Probability Bounds Computation}\label{sec: computation bounds by crown}
\begin{figure}
    \centering
    \vspace{2mm}
    \includegraphics[width=0.35\textwidth]{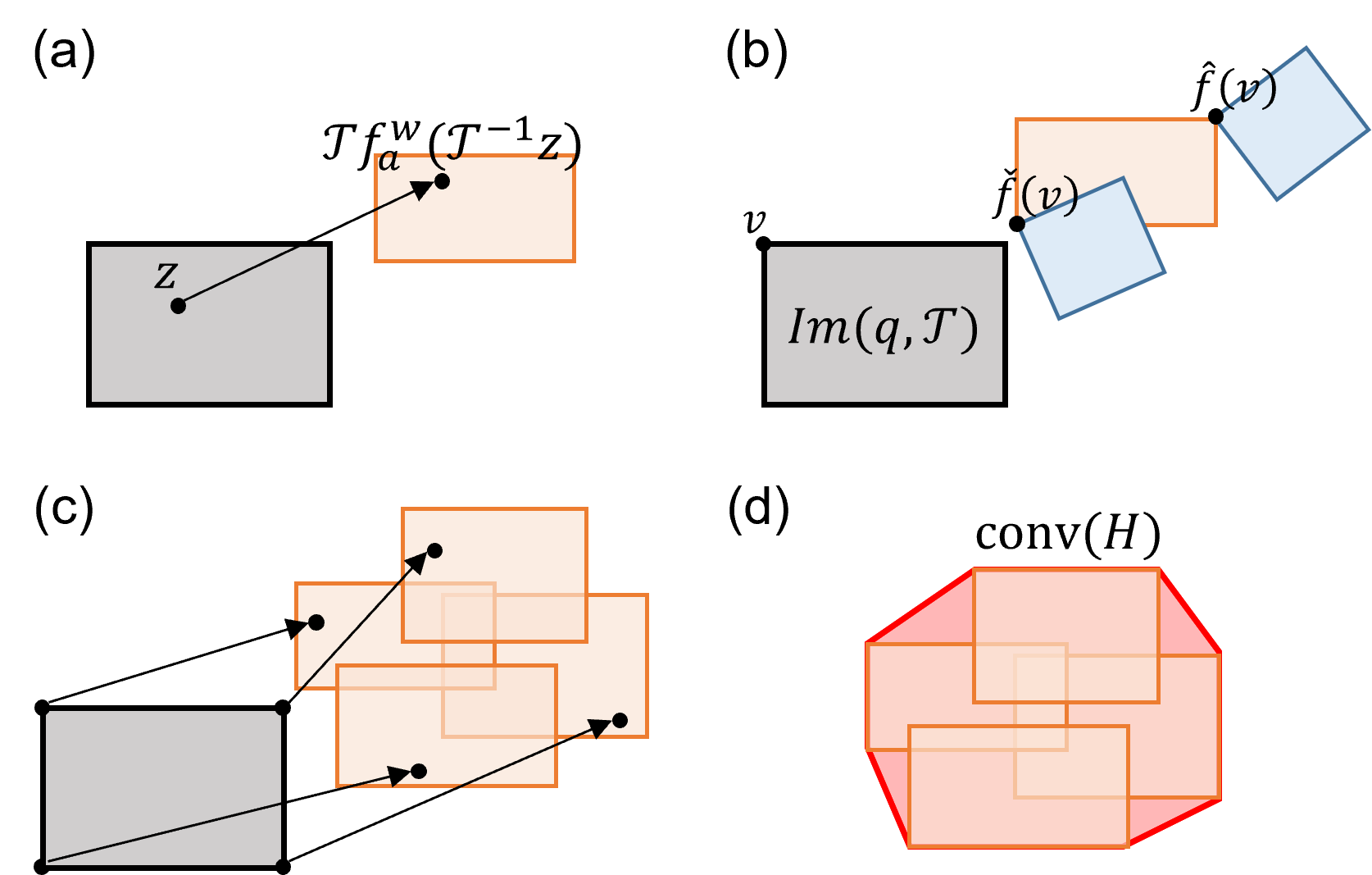}
    \caption{(a) For each point $z$ in the black rectangle ($\image{q}$), $\nnT{z}$ is contained in the orange rectangle whose vertices are defined by $\lbf(z)$ and $\ubf(z)$. (b) The blue rectangles contain $\ubf(z)$ and $\ubf(z)$ for all $z\in \image{z}$.
    (c) The orange rectangles capture $\nnT{v}$ for all vertices $v$ of $\image{q}$ and (d) fully define the red convex region that captures $\nnT{z}$ for all $z\in \image{q}$.
    }
    \vspace{-2mm}
    \label{fig:one-step crown set}
\end{figure}
We derive an efficient and scalable procedure for the computation of the bounds in \eqref{eq: optim problms trans kern}. 
Recall that the discretization procedure described above enables to write 
transition kernel $T^a$ as the product of {erf} in \eqref{eq: g}. Hence,
the optimization problems in \eqref{eq: optim problms trans kern} can be performed on 
\eqref{eq: g}, i.e., for $q',q \in \Q$ to bound $\tk{q}$, we can optimize $g(z)$ over $z\in\post{q}$. However, the 
exact computation of $\post{q}$ is intractable since
NN-dynamics are inherently nonconvex.  
Hence, we instead seek to overapproximate $\post{q}$ 
by recursively finding linear functions on the NN-structure that under- and overapproximate $\nn{x}$ for all $x\in q \in Q$, as shown in \cite{Xu2020AutomaticBeyond}. 
We can then utilize these linear functions to bound $\post{q}$ for all $q\in \Q$ as shown in the following proposition. 

\begin{proposition}\label{lemma: one-step crown set}
For Process \eqref{eq: process} under action $a$, region $q\subset \mathbb{R}^n$, and proper transformation matrix $\T$   w.r.t. $q$, let $\{v_1,\hdots,v_{(2^n)}\}\in\mathbb{R}^n$ be the vertices of hyper-rectangle $\image{q}$, and $\lbf, \ubf: \mathbb{R}^n \rightarrow \mathbb{R}^n$ be linear functions that bound $\nnT{z}$ for all $z\in \image{q}$. Define
$  H = \{\rectf{v}\mid v\in \{v_1,\hdots,v_{(2^n)}\}\}.$
Then, it holds that 
$ \post{q} \subseteq \conv{H}, $
where $\conv{H}$ is the convex hull of hyper-rectangles in $H$.
\begin{proof}
For $z\in \image{q}$, we have that $    \nnT{z} \in\rectf{z}$.
 Consequently,   $\post{q} \subseteq \postapprox{q}=\cup_{z\in \image{q}} \rectf{z}   $. Note that $\image{q}$ is a convex-polytope and the construction of $\rectf{z}$ only involves linear operations. As a consequence, $\postapprox{q}$ is fully described by the vertices of $\image{q}$. Hence, $\postapprox{q}{\subseteq} 
 conv(H)$.
\end{proof}
\end{proposition}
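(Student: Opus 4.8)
The plan is to reduce the set containment to an elementary statement about convex combinations, using only that $\image{q}$ is a convex polytope and that the bounding functions are affine. First I would rewrite the post-image as the image of the transformed network over the rectangle $\image{q}$: for every $x\in q$ the point $z=\T x$ lies in $\image{q}$ and satisfies $\T\nn{x}=\nnT{z}$, so that $\post{q}=\{\nnT{z}\mid z\in\image{q}\}$. Since $\lbf$ and $\ubf$ bound $\nnT{z}$ coordinatewise for every such $z$, each $\nnT{z}$ lies in $\rectf{z}$, which yields $\post{q}\subseteq\postapprox{q}:=\bigcup_{z\in\image{q}}\rectf{z}$. It then suffices to prove the purely geometric inclusion $\postapprox{q}\subseteq\conv{H}$.

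The key structural facts I would exploit are that $\image{q}$ is a hyper-rectangle, hence a convex polytope whose points are precisely the convex combinations of its vertices $v_1,\dots,v_{2^n}$, and that $\lbf,\ubf$ are affine (this is exactly what the convex relaxation of \cite{Xu2020AutomaticBeyond} produces). Fixing an arbitrary $y\in\postapprox{q}$, there is some $z\in\image{q}$ with $\lbf(z)^{(l)}\le y^{(l)}\le\ubf(z)^{(l)}$ for every coordinate $l$. Writing $z=\sum_i\lambda_i v_i$ with $\lambda_i\ge 0$ and $\sum_i\lambda_i=1$, affinity lets me commute the bounding functions with the convex combination, giving $\lbf(z)=\sum_i\lambda_i\lbf(v_i)$ and $\ubf(z)=\sum_i\lambda_i\ubf(v_i)$.

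The central step is then an explicit reconstruction of $y$ as a convex combination of points drawn from the vertex rectangles in $H$, reusing the same weights $\lambda_i$. For each coordinate $l$ I would choose the interpolation parameter $t_l\in[0,1]$ for which $y^{(l)}=\lbf(z)^{(l)}+t_l\bigl(\ubf(z)^{(l)}-\lbf(z)^{(l)}\bigr)$, and then define $w_i^{(l)}=\lbf(v_i)^{(l)}+t_l\bigl(\ubf(v_i)^{(l)}-\lbf(v_i)^{(l)}\bigr)$. Because $t_l\in[0,1]$, each $w_i$ lies in $\rectf{v_i}\in H$, and substituting the two affine expansions above shows $y=\sum_i\lambda_i w_i$ coordinate by coordinate, exhibiting $y$ as a convex combination of elements of $H$, i.e. $y\in\conv{H}$.

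The main obstacle I anticipate is precisely this bookkeeping in the last step: the interpolation parameter $t_l$ depends on the coordinate, so the reconstructed points $w_i$ must be built componentwise, and one must check carefully that each $w_i$ still lies inside its own vertex rectangle (guaranteed by $t_l\in[0,1]$) while the weights $\lambda_i$ remain \emph{shared across all coordinates}, so that $\sum_i\lambda_i w_i$ is a genuine convex combination rather than a coordinatewise artifact. Everything else is routine: the reduction in the first paragraph uses only validity of the bounds $\lbf,\ubf$, and the affinity invoked in the second paragraph is available because $\T$ is proper with respect to $q$, which forces $\image{q}$ to be a hyper-rectangle on which the CROWN relaxation returns linear (affine) bounds.
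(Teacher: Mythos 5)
Your proof is correct and follows essentially the same route as the paper's: first $\post{q}\subseteq\postapprox{q}=\bigcup_{z\in\image{q}}\rectf{z}$ via validity of the bounds, then $\postapprox{q}\subseteq\conv{H}$ via affinity of $\lbf,\ubf$ and the fact that $\image{q}$ is the convex hull of its vertices. The only difference is that the paper asserts the second inclusion in one line (``$\postapprox{q}$ is fully described by the vertices of $\image{q}$''), whereas you supply the explicit convex-combination reconstruction $y=\sum_i\lambda_i w_i$ with shared weights and coordinatewise interpolation parameters $t_l$ — a welcome filling-in of the detail the paper glosses over, not a different argument.
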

As a result of the above proposition, to construct the post image overapproximation induced by the local linear under- and overapproximations of the NN, we only have to check the vertices of the image as illustrated in Figure \ref{fig:one-step crown set}. 
Utilizing the analytical reformulation of the transition kernel as in \eqref{eq: g} and the post-image overapproximation as derived in Proposition \ref{lemma: one-step crown set}, we obtain that 
\begin{align}
    \min_{x\in q}\tk{q'} &\geq\min_{z\in \conv{H}} \g  \label{eq: exact min opt prob} \\
    \max_{x\in q}\tk{q'} &\leq \max_{z\in \conv{H}} \g  \label{eq: exact max opt prob}
\end{align}
where $H$ is as defined in Proposition \ref{lemma: one-step crown set}. Here, \eqref{eq: exact max opt prob} is a log-concave maximization problem, which can be solved with standard convex optimization algorithms, such as gradient descent \cite{Boyd2004ConvexOptimization}. Although \eqref{eq: exact min opt prob} is in general non-convex, the following result -- a consequence of Corollary 32.3.4 in \cite{rockafellar2015convex} -- guarantees that to compute lower bound \eqref{eq: exact min opt prob}, i.e., the minimum of a log-concave problem, it suffices to check the vertices of $\conv{H}$. 

\begin{lemma}
\label{prop: min log-concave func}
For $\g$ as defined in \eqref{eq: g} it holds that
\begin{equation}
    \min_{z\in\conv{H}} \g = \min_{z\in V} \g  
\end{equation}
where $V$ is the set of vertices of $\conv{H}$.
\end{lemma}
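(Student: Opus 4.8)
The plan is to reduce the statement to the classical fact that a concave function attains its minimum over a polytope at a vertex, exploiting the monotonicity of the logarithm. First I would record two preliminary observations about $\g$ on the domain $\conv{H}$. The set $\conv{H}$ is a compact convex polytope, being the convex hull of the union of finitely many closed bounded hyper-rectangles in $H$; hence its set of extreme points coincides with its vertex set $V$, and the continuous function $\g$ attains its minimum on it. Moreover $\g$ is strictly positive there: since $\check{x}^{(l)}<\hat{x}^{(l)}$ and $\erf{\cdot}$ is strictly increasing, every factor in \eqref{eq: g} is positive.

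Second, I would argue that $\g$ is log-concave, which is the crux of the lemma. Up to the constant $1/2$, the $l$-th factor in \eqref{eq: g} equals the mass that a standard normal with mean $z^{(l)}$ assigns to the interval $[\check{x}^{(l)},\hat{x}^{(l)}]$, namely $\int \normaldist{\bar x}{z^{(l)}}{1}\,\mathbf{1}[\bar x\in[\check{x}^{(l)},\hat{x}^{(l)}]]\,d\bar x$. The integrand is jointly log-concave in $(z^{(l)},\bar x)$, being the product of the log-concave Gaussian density and the log-concave indicator of a convex set, so by Pr\'ekopa's theorem its marginal in $z^{(l)}$ is log-concave. Because $\g$ is a product of such factors, each depending on a distinct coordinate of $z$, the function $\log \g$ is a sum of concave one-dimensional functions and is therefore concave; equivalently $\g$ is log-concave. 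This is precisely the property already invoked to call \eqref{eq: exact max opt prob} a log-concave maximization.

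With positivity and log-concavity in hand, the conclusion is immediate. Since $\log$ is strictly increasing and $\g>0$ on $\conv{H}$, the problems $\min_{z\in\conv{H}}\g$ and $\min_{z\in\conv{H}}\log\g$ share the same minimizers. The function $\log\g$ is concave, so by Corollary 32.3.4 of \cite{rockafellar2015convex} (a concave function attains its minimum over a compact convex set at an extreme point) it is minimized over the polytope $\conv{H}$ at some extreme point $z^\star\in V$. Hence $\min_{z\in\conv{H}}\g=\g(z^\star)=\min_{z\in V}\g$, as claimed. The one genuinely substantive step is the log-concavity argument; the remaining ingredients — compactness, identification of extreme points with vertices, and passing through the monotone map $\log$ — are routine.
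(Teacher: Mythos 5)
Your argument is correct and follows exactly the route the paper intends: the paper gives no explicit proof, merely asserting the lemma as a consequence of Corollary 32.3.4 of \cite{rockafellar2015convex} applied to the log-concave function $\g$, and you supply precisely the missing details (compactness of $\conv{H}$, positivity of $\g$, log-concavity of each erf-difference factor via Pr\'ekopa, and the reduction of minimizing $\g$ to minimizing the concave function $\log \g$ at an extreme point). No gaps; this is a faithful expansion of the paper's one-line justification.
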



\subsection{Efficient Computation of Transition Probabilities}\label{sec: efficient computation}
Note that, although solving for \eqref{eq: exact min opt prob} and \eqref{eq: exact max opt prob} reduces to the solution of convex maximization and minimization problems, to build the abstraction, we still need to solve  $\mathcal{O}(|\Q|^2)$ of these problems (one for each pair of states in $\Q$). 
This becomes expensive for large $|\Q|$, which is often the case for high-dimensional systems.
In this section, we propose an alternative approach to reduce this computational burden. 
In particular, the following theorem shows that if we overapproximate $conv(V)$ by an axis-aligned hyper-rectangle, to find solutions to \eqref{eq: optim problms trans kern}, we
only have to check a finite number of points at the boundary of the axis-aligned hyper-rectangle and perform $\mathcal{O}(|\Q|^2)$ function evaluations. 

\begin{theorem}\label{theor: sorting}
{
For Process \eqref{eq: process} under action $a$, regions $q,q'\subset \mathbb{R}^n$, and proper transformation matrix $\T$ w.r.t. $q,q'$, construct $H$ w.r.t. $q$ per Proposition \ref{lemma: one-step crown set}. Further, let vectors $\check{z},\hat{z}$ define the vertices of $\rectregion{\conv{H}}$, i.e., $\rectregion{\conv{H}}= \rect{z}$ such that for every  $l \in \{0,\hdots,n\}$  $\check{z}^{(l)}\leq\hat{z}^{(l)}$, and  denote by $\bar{v}$ 
the center of $\image{q'}$. 
Then, for $z_{min},z_{max}\in \mathbb{R}^n$ defined such that for $l \in \{0,\hdots,n\}$ 
\begin{align*}
    z_{min}^{(l)}&=\argmax{z^{(l)}\in \{ \check{z}^{(l)}, \hat{z}^{(l)}\}} |z^{(l)}-\bar{v}^{(l)}|,\\
    z_{max}^{(l)} &=  
    \begin{cases}
        \bar{v}^{(l)} &  \text{if }   \bar{v}^{(l)}\in  [\check{z}^{(l)}, \hat{z}^{(l)}]\\
        \argmin{z^{(l)}\in \{ \check{z}^{(l)}, \hat{z}^{(l)}\}} |z^{(l)}-\bar{v}^{(l)}| & \text{otherwise}
    \end{cases}
\end{align*}
it holds that 
\begin{align}
    \min_{x\in q}\tk{q'} \geq g(z_{min}),\quad
    \max_{x\in q}\tk{q'} \leq g(z_{max}). \label{eq: theor lb}
\end{align}
}
\end{theorem}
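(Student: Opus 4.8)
The plan is to reduce the two intractable optimisations over $\post{q}$ to coordinate-wise scalar problems over an axis-aligned box, exploiting the product structure of $\g$. First I would rewrite the objective: because $\T$ is proper w.r.t.\ $q'$, the discretisation makes the kernel analytic, so $\tk{q'}=g(\T\nn{x})$ and hence $\min_{x\in q}\tk{q'}=\min_{z\in\post{q}}\g$ (and similarly for the max), where $z$ ranges over $\post{q}=\{\T\nn{x}\mid x\in q\}$. By Proposition~\ref{lemma: one-step crown set} together with the definition of the rectangular over-approximation, $\post{q}\subseteq\conv{H}\subseteq\rectregion{\conv{H}}=\rect{z}$. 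Since a minimum (resp.\ maximum) taken over a subset is no smaller (resp.\ no larger) than over the superset, it follows that
\[
\min_{z\in\post{q}}\g\ \ge\ \min_{z\in\rect{z}}\g,\qquad \max_{z\in\post{q}}\g\ \le\ \max_{z\in\rect{z}}\g,
\]
so it suffices to optimise $\g$ over the box $\rect{z}=\prod_{l}[\check{z}^{(l)},\hat{z}^{(l)}]$.

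The key structural observation is that $\g=\prod_{l=1}^n g_l(z^{(l)})$, where each factor
\[
g_l(m)=\tfrac12\Bigl(\erf{\tfrac{m-\check{x}^{(l)}}{\sqrt2}}-\erf{\tfrac{m-\hat{x}^{(l)}}{\sqrt2}}\Bigr)
\]
depends on a single coordinate and equals the mass that a unit-variance Gaussian with mean $m$ assigns to the fixed target interval $[\check{x}^{(l)},\hat{x}^{(l)}]$ of $\image{q'}$; in particular $g_l\ge 0$. Because the feasible set is a Cartesian product of intervals and every factor is nonnegative, the optimisation decouples: $\max_{z\in\rect{z}}\g=\prod_l\max_{m\in[\check{z}^{(l)},\hat{z}^{(l)}]}g_l(m)$, and likewise for the minimum, where for the min one uses nonnegativity to drive each factor independently to its smallest value. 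This decoupling is precisely what the axis-aligned relaxation $\rectregion{\conv{H}}$ buys us; the convex hull $\conv{H}$ alone would not separate.

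It then remains to locate the scalar optimisers, for which I would establish that $g_l$, as a function of the mean $m$, is symmetric about and strictly decreasing in the distance to the interval centre $\bar{v}^{(l)}=\tfrac12(\check{x}^{(l)}+\hat{x}^{(l)})$. By a translation, $g_l(m)$ is the mass a \emph{standard} Gaussian assigns to the shifted interval $[\check{x}^{(l)}-m,\hat{x}^{(l)}-m]$; since the standard Gaussian density is symmetric and unimodal about the origin, this mass is maximal when the interval is centred at $0$, i.e.\ when $m=\bar{v}^{(l)}$, and decreases monotonically as $|m-\bar{v}^{(l)}|$ grows. Consequently, on $[\check{z}^{(l)},\hat{z}^{(l)}]$ the factor $g_l$ attains its maximum at $\bar{v}^{(l)}$ if $\bar{v}^{(l)}$ lies in the interval and otherwise at the endpoint nearest $\bar{v}^{(l)}$ — exactly the definition of $z_{max}^{(l)}$ — and attains its minimum at the endpoint farthest from $\bar{v}^{(l)}$ — exactly $z_{min}^{(l)}$. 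Multiplying these coordinate-wise optima yields $\min_{z\in\rect{z}}\g=g(z_{min})$ and $\max_{z\in\rect{z}}\g=g(z_{max})$, which combined with the containment inequalities above gives \eqref{eq: theor lb}.

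The main obstacle is not any single computation but making the separation rigorous, namely justifying that minimising and maximising a product of nonnegative single-variable functions over a box reduces to independent scalar problems; the only analytic ingredient, the symmetry and unimodality of the Gaussian interval-mass in its mean, is elementary and follows directly from symmetry of the error function.
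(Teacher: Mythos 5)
Your proposal is correct and follows essentially the same route as the paper's proof: containment $\post{q}\subseteq\conv{H}\subseteq\rectregion{\conv{H}}$, coordinate-wise decoupling of the product $\g$ over the axis-aligned box, and the observation that the Gaussian interval-mass is unimodal and symmetric in its mean about the interval centre $\bar{v}^{(l)}$. If anything, you are more careful than the paper, which only treats the max case explicitly and leaves the nonnegativity argument needed for the decoupling (especially in the min case) implicit.
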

\begin{proof}
We consider the max case; the min case follows  similarly. By construction $\conv{H} \subseteq\rectregion{\conv{H}}$, hence 
$\max_{x\in q}\tk{q'} \leq \max_{z\in \rectregion{\conv{H}}} g(z). $ 
As $\rectregion{\conv{H}}$ is an axis-aligned hyperrectangle, it holds that
$  \max_{z\in \rectregion{\conv{H}}} g(z) = 
    \prod^n_{l=1} \max_{z^{(l)} \in [\check{z}^{(l)}, \hat{z}^{(l)}]} \int_{\check{v}^{(l)}}^{\hat{v}^{(l)}} \mathcal{N}(\bar{x}^{(l)}\mid z^{(l)}, 1)d\bar{x}^{(l)},$
where $[\check{v}^{(l)},\hat{v}^{(l)}]$ is the interval of the $l-$th dimension of $\image{q'}$.
This is a product of $n$ maximization problems that seek for the mean of a Gaussian distribution that maximizes its integral on a set. Each of these is maximized by minimizing the distance of $z^{(l)}$ to the center point $\bar{v}^{(l)}$ of the integration set. Hence, $z_{max}^{(l)}$ is equal to $\bar{v}^{(l)}$ if $\bar{v}^{(l)}\in [\check{z}^{(l)}, \hat{z}^{(l)}]$, else to one of the  endpoints of $[\check{z}^{(l)}, \hat{z}^{(l)}].$
\end{proof}

According to the above theorem, given $\conv{H}$ there exist a finite number
of potential $z_{max}$ and $z_{min}$. Moreover, given a potential $z_{min}$ or $z_{max}$ we can immediately find all the regions that take optimal value for \eqref{eq: theor lb} at this point, based on the positions of the regions in the grid, as illustrated in Figure \ref{fig: sorting}. As a consequence, we can simply check the finite sets of all possible $z_{min}$ and $z_{max}$ once to obtain $z_{max}$ and $z_{min}$ for all regions in the same group, and compute \eqref{eq: theor lb} by evaluating function $g$ on $z_{min}$ and $z_{max}$ for each region. Although this dramatic reduction of computation comes at the cost of more conservative bounds compared to solving \eqref{eq: exact min opt prob} and \eqref{eq: exact max opt prob}, the introduced overapproximation can be reduced by a refinement algorithm as proposed in Section \ref{sec: synthesis driven refinement}.

\begin{figure}
    \centering
    \vspace{2mm}
    \includegraphics[width=0.35\textwidth]{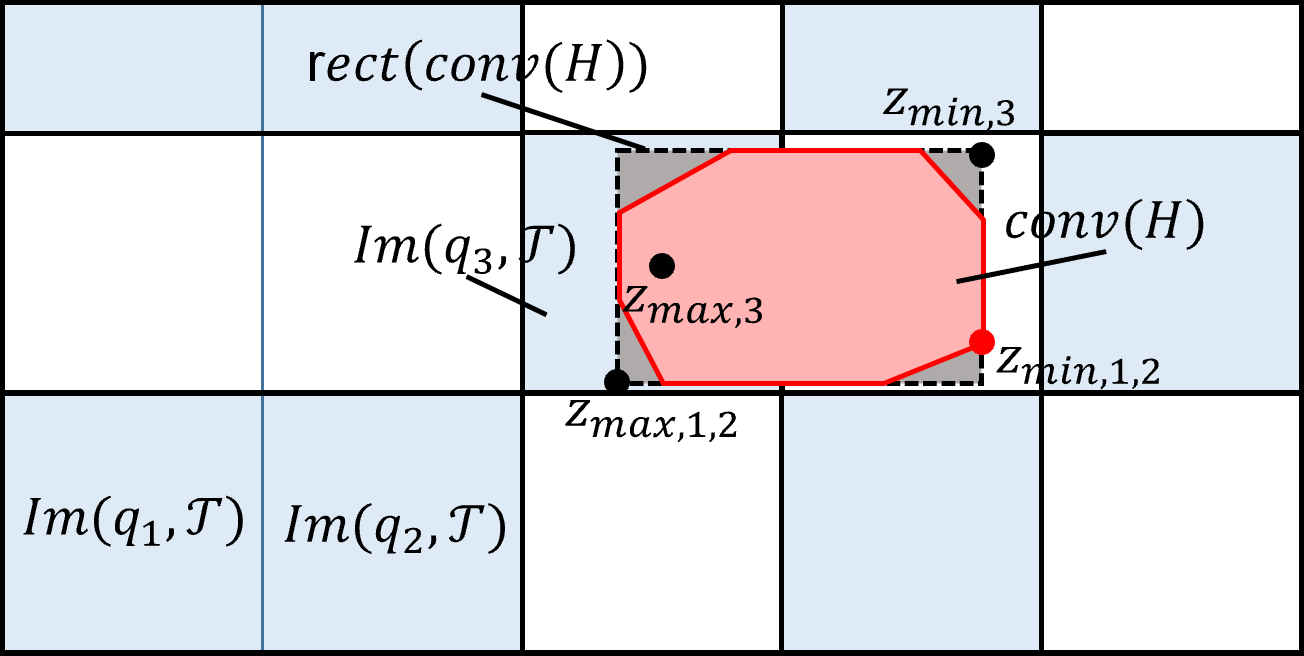}
    \caption{Regions $q_1,q_2$ share the same relative position w.r.t. $\rectregion{\conv{H}}$, i.e., $\forall z\in q_1,q_2$ it holds that $z^{(l)}\leq \inf(\rectregion{\conv{H}}^{[l]})$
    for all $l\in\{0,\hdots,n\}.$
    Consequently, they share 
    the same $z_{min}$ and $z_{max}$ for problems \eqref{eq: theor lb}. The white-blue coloring of the regions represents the grouping of regions accordingly.
    For overlapping region $q_3$, the minimizing location for \eqref{eq: exact min opt prob} is found at a vertex of $\conv{H}$.
    }
    \label{fig: sorting}
    \vspace{-1mm}
\end{figure}

\section{Control Synthesis \& Abstraction Refinement}\label{sec: synthesis}
Given Process \eqref{eq: process} and an LTLf property $\phi$, our objective is to synthesize a strategy that maximizes the probability of satisfying $\phi$. 
The IMDP abstraction $\imdp$ as constructed above, captures 
the behavior of Process \eqref{eq: process} w.r.t. the regions of interest. Therefore, we can focus on finding a strategy for $\imdp$ that maximizes $\phi$ subject to being 
robust against all the uncertainties (errors) induced by the discretization of space and the NN-dynamics approximation process.
This translates to assuming that the adversary's (uncertainty) objective is to minimize the probability of satisfaction.  Hence,
\begin{equation}\label{eq: synthesis problem}
    \strategyI^* = \argmax{\strategyI \in \strategiesI} \min_{\adversary \in \adversaries} \mathbb{P}[\pathI\models \phi \mid \strategyI, \adversary, \pathI(0)=q].
\end{equation}
We note that $\strategyI^*$ can be computed using known algorithms with a computational complexity polynomial in the number of states in the IMDP \cite{Lahijanian2015FormalSystems}.
To show that $\strategyI^*$ maps to a robust strategy for Process \eqref{eq: process}, we need to introduce a mapping between the process and the IMDP. Let 
$\mappingpoint: \mathbb{R}^n \rightarrow Q$
be a function that maps continuous states $x\in \mathbb{R}^n$ to their corresponding discrete regions in $Q$, i.e., $
x\in q \implies \mappingpoint(x)=q . $ 
In addition, let $\mappingpath:\finitepathsx\rightarrow\finitepathsI$ be a function that maps finite paths of Process \eqref{eq: process} to the finite paths of IMDP $\imdp$, i.e., for a finite path $\pathx{N} =x_0\xrightarrow{a_0}x_1\xrightarrow{a_1}\hdots\xrightarrow{a_{N-1}}x_N$, 
$\mappingpath(\pathx{N}) = \mappingpoint(x_0)\xrightarrow{a_0} \mappingpoint (x_1)\xrightarrow{a_1}\hdots\xrightarrow{a_{N-1}} \mappingpoint(x_N)$.
Then, we can map $\strategyI^*$ to a switching strategy $\strategyx$
through
\begin{equation}\label{eq: strategy mapping}
    \strategyx^ *(\pathx{N})=\strategyI^*(\mappingpath(\pathx{N})).
\end{equation}
Further, we define the lower and upper bounds of the probability of satisfaction of $\phi$ under $\strategyI^*$ as
\begin{align}
    \satlp{q}&=\min_{\adversary \in \adversaries} \mathbb{P}[\pathI\models\phi\mid\strategyI^*, \adversary, \pathI(0)=q],\label{eq: lower optimal prob bound} \\
    \satup{q}&=\max_{\adversary\in\adversaries}\mathbb{P}[\pathI\models\phi\mid\strategyI^*,\adversary,\pathI(0)=q], \label{eq: upper optimal prob bound}
\end{align}
respectively. 
The following theorem shows that the satisfaction probability bounds also hold for Process \eqref{eq: process} under $\strategyx^*$.

\begin{theorem}\label{theorem: corectness}
Given Process \eqref{eq: process}, a compact set 
$\domain\subset \mathbb{R}^n$, and an LTLf formula $\phi$
defined over the regions of interest in $\domain$,
let $\imdp$ be the IMDP abstraction of Process \eqref{eq: process} as described in Section \ref{sec: imdp abstraction}.
Further, let $\strategyI^*$ be computed by \eqref{eq: synthesis problem} with probability bounds $\check{p}$ and $\hat{p}$ as in \eqref{eq: lower optimal prob bound} and \eqref{eq: upper optimal prob bound}, respectively. Map $\strategyI^*$ into a switching strategy $\strategyx^*$ as in \eqref{eq: strategy mapping}. Then for any initial state $x_0 \in \domain$ it holds that
\begin{equation*}
    P[\pathx{}\models\phi \ \mid \ \strategyx^*,\pathx{}(0)=x_0] \in [\satlp{\mappingpoint (x_0)},\satup{\mappingpoint(x_0)}].
\end{equation*}
\end{theorem}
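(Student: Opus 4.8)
The plan is to establish this as a soundness-of-abstraction result, following the standard route of reducing LTLf satisfaction to reachability on a product automaton and then showing that the continuous dynamics realize a feasible adversary of the IMDP at every state. First I would convert $\phi$ into an equivalent deterministic finite automaton $\mathcal{A}_\phi$ with transition function $\delta$ and take its product with both Process \eqref{eq: process} and the abstraction $\imdp$, so that $\pathx{}\models\phi$ (resp. $\pathI\models\phi$) becomes reachability of an accepting product state. The crucial observation is that, because the discretization respects the regions of interest and $\LabelingI(q)=L(x)$ for every $x\in q$, the observation trace of $\pathx{}$ coincides exactly with that of $\mappingpath(\pathx{})$; hence the two product processes synchronise on the same automaton state at every step, giving $\pathx{}\models\phi \iff \mappingpath(\pathx{})\models\phi$. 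This also means the memoryless-on-the-product strategy realized by $\strategyI^*$, and the mapped $\strategyx^*$ from \eqref{eq: strategy mapping}, select the same action at corresponding states, so the two models are directly comparable.

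Second, I would show that the continuous transition kernel induces a feasible IMDP distribution. Fix $a\in A$, a region $q$, and a state $x\in q$. By the abstraction construction \eqref{eq: optim problms trans kern}, $\lP{q}{q'}\le \min_{\tilde x\in q} T^a(q'\mid\tilde x)\le \tk{q'}\le\max_{\tilde x\in q}T^a(q'\mid\tilde x)\le\uP{q}{q'}$ for every $q'$, with the analogous inequalities for $q_u$. Since $\sum_{q'\in Q}\tk{q'}=1$ by \eqref{eq: transition kernel}, the map $q'\mapsto\tk{q'}$ lies in $\mathcal{D}(Q)$ and within the IMDP intervals, i.e. it is exactly a feasible choice $\gamma_q^a$ of some adversary $\adversary\in\adversaries$. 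Thus every one-step distribution realizable by Process \eqref{eq: process} from a state in $q$ is available to the IMDP adversary at $q$.

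Third, I would run the sandwich argument through the value-iteration fixed point, writing $\check{p}$ and $\hat{p}$ for the min- and max-adversary reachability values extended to product states $(q,s)$. Let $V(x,s)$ be the probability that Process \eqref{eq: process} under $\strategyx^*$ reaches an accepting automaton state from $x$ with current automaton state $s$; a one-step expansion on the kernel gives $V(x,s)=\sum_{q'}\int_{q'} V(\bar x,\delta(s,\LabelingI(q')))\, dT^a(\bar x\mid x)$ with $a$ the action chosen at $(q,s)=(\mappingpoint(x),s)$, using labeling consistency to identify the next automaton state on each region. Proceeding by induction on the value-iteration horizon, the hypothesis $\check{p}(q',s')\le V(\bar x,s')\le\hat{p}(q',s')$ for all $\bar x\in q'$ lets me pull the bounds out of each integral and collapse it to $\tk{q'}$, yielding $V(x,s)\ge\sum_{q'}\tk{q'}\,\check{p}(q',\cdot)\ge\min_{\gamma}\sum_{q'}\gamma(q')\,\check{p}(q',\cdot)=\check{p}(q,s)$ by the feasibility of Step two and the fixed-point characterization of $\check{p}$ under $\strategyI^*$; the dual inequality gives $V(x,s)\le\hat{p}(q,s)$. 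Monotonicity of the robust Bellman operators and convergence of value iteration for IMDP reachability \cite{Lahijanian2015FormalSystems} close the induction, and restricting to the initial automaton state recovers the claimed containment in $[\satlp{\mappingpoint(x_0)},\satup{\mappingpoint(x_0)}]$.

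The main obstacle I expect is making this third step fully rigorous, rather than the conceptual coupling, which is immediate from the bounds. Specifically, I must justify measurability of $V(\cdot,s)$, the interchange of sum, integral, and the pointwise inductive bound, and the convergence of value iteration for the continuous process to the true satisfaction probability alongside its IMDP counterpart to $\check{p},\hat{p}$. The LTLf-over-finite-traces semantics helps: satisfaction is decided in finite time, so absorbing accepting and rejecting automaton states furnish the $\{0,1\}$ base case and guarantee convergence. Care is still needed to handle paths that leave $\domain$, which are absorbed into $q_u$ and can never reach an accepting state, so that this behavior is matched uniformly in both the continuous model and the abstraction.
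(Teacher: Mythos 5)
Your proposal is correct and follows essentially the same route as the paper, which does not spell out the argument itself but defers to the proof of Theorem 2 in \cite{formal2021Jackson}: that proof likewise reduces LTLf satisfaction to reachability on a product with the DFA of $\phi$, uses the labeling consistency of the discretization to synchronize observation traces, shows that the one-step distribution $q'\mapsto \tk{q'}$ induced by any $x\in q$ is a feasible adversary choice, and sandwiches the continuous value function between the robust Bellman fixed points $\check{p}$ and $\hat{p}$ by induction on the value-iteration horizon. Your identification of the remaining technical care points (measurability, absorption into $q_u$, and finite-trace termination supplying the base case) matches what the cited proof handles, so the proposal fills in precisely the argument the paper omits.
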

The proof of this theorem follows similarly as the proof of Theorem 2 in \cite{formal2021Jackson}. 
Theorem \ref{theorem: corectness} guarantees that the probability that Process \eqref{eq: process} satisfies $\phi$ is contained in the satisfaction probability bounds $\check{p}$ and $\hat{p}$. 
The difference between $\check{p}$ and $\hat{p}$ can be viewed as the error induced by space discretization and local approximation of the NN dynamics with linear functions. This error monotonically decreases if the size of the discretization decreases. As a consequence, the synthesized strategy is optimal for an infinitely fine grid. 



\subsection{Synthesis driven refinement}\label{sec: synthesis driven refinement}
Here, we present a discretization refinement scheme that aims to efficiently reduce the error induced by the space discretization. In each refinement step, we refine a predefined fixed number of states in $\Q$, which we refer to as $n_{ref}$.  
To enable the use of Theorem \ref{theor: sorting} and Proposition \ref{prop: min log-concave func}, our refinement guarantees that all refined regions are axis-aligned hyper-rectangles in the transformed space. 
Hence, a region is refined by splitting the corresponding hyper-rectangle region $\image{q}$ over one dimension.
To decide on which states to refine, we define a score function $\theta: Q\rightarrow \mathbb{R}^+$ as
\begin{equation*}
    \theta(q) = (\satlp{q}-\satup{q}) \sum_{a\in A} \sum_{q'\in Q}\left(\uP{q'}{q}-\lP{q'}{q}\right)
\end{equation*}
where $\check{p}$ and $\hat{p}$ are the satisfaction probabilities as defined by \eqref{eq: lower optimal prob bound} and \eqref{eq: upper optimal prob bound}. We refine the $n_{ref}$ regions with the highest $ \theta(q)$. The score function serves as a measure of uncertainty caused by state $q\in \Q$ and closely resembles the uncertainty measure  proposed in \cite{coogan2018} in a verification context. 

The rationale behind our choice of which dimension to refine is based on the objective to reduce the conservatism introduced by the NN overapproximation process. 
In particular, for $q\in\Q$, we want to find the dimension that minimizes the volume of $\conv{H}$, as described in Proposition \ref{lemma: one-step crown set}, for both regions created by splitting $\image{q}$ over this dimension. 
To do so, we transform all the edges of $\image{q}$ using the bounding functions and measure the expansion of the edges, i.e., the relative difference in distance between the vertices describing an edge before and after the transformation. As $\image{q}$ is an axis-aligned hyper-rectangle, we then take the dimension to refine equal to the dimension the largest expanded edge aligns to. 
The exact procedure to find the dimension to refine can be found in 
\ifthenelse{\boolean{arxiv}}{Appendix \ref{sec: refinement procedure}}{the technical report \cite{adams2022Formal}.}



\section{Case Studies}
\begin{figure}
    \centering
    \vspace{0.5mm}
    \includegraphics[width=0.69\linewidth]{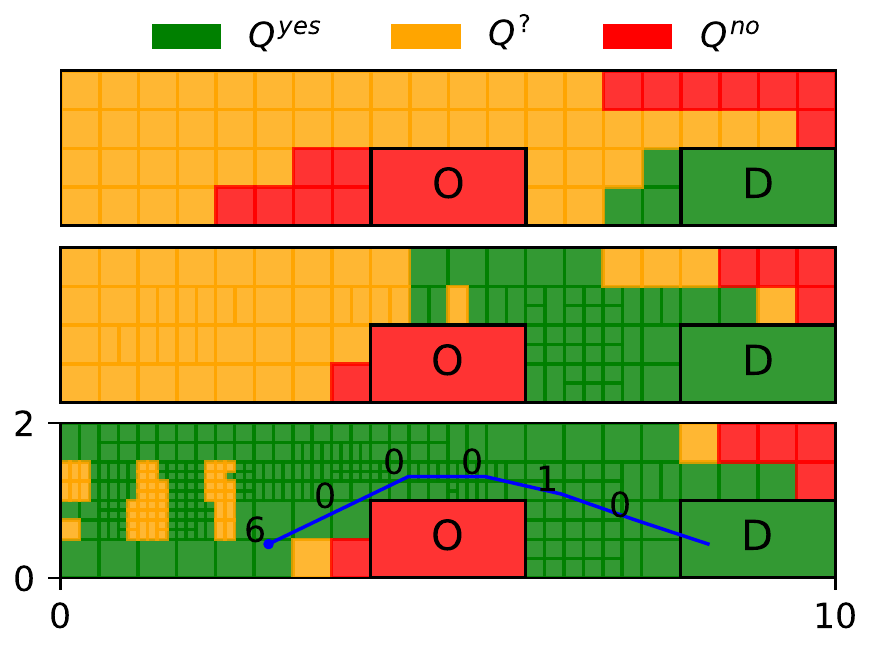} \vspace{-2mm}
    \caption{
    Region labeling and classification of each initial states $x \in X$ as $Q^{\text{yes}}$ if $\satlp{\mappingpoint(x)}\geq 0.95$, $Q^{\text{no}}$ if $\satup{\mappingpoint(x)}<0.95$ and $Q^{\text{?}}$ otherwise, for the first, an intermediate and the final abstraction of Experiment 1. In blue, a simulated path labeled with the action chosen at each step.
    }
    \label{fig: clas results car-simulations}
    \vspace{-2mm}
\end{figure}
We consider 3 different NNDMs learned on non-linear datasets taken from the literature (for details see Appendix
\ifthenelse{\boolean{arxiv}}
{\ref{sec: case studies}}{I-E in the technical report \cite{adams2022Formal}}).
All transition probability bounds of the IMDP abstractions were computed using Theorem \ref{theor: sorting}, except for the lower bounds on transitions to regions that overlap with the post image overapproximation of the region from which the transition starts. For those, we used Proposition \ref{lemma: one-step crown set}\ifthenelse{\boolean{arxiv}}{, as illustrated in Figure \ref{fig: sorting} and explained in detail in Appendix \ref{sec: algorithm}.}{.}
{Empirically, we found this approach to offer good results in balancing between precision and scalability. }
All experiments were run on an Intel Core i7-10610U CPU at 1.80GHz$\times$ 2.30Hz  with 16GB of RAM. 

\subsubsection{Efficient Control Synthesis by Iterative Refinement}\label{sec: case studies - effect refin}
We consider a 3-D car model from \cite{vehicle_dynamics_rajesh_rajamani}, with state space representing position and orientation of the car and seven discrete actions switching between different feedback controllers that interact with the car and steer it to a given orientation.
We are interested in synthesizing a strategy for a static overtaking scenario as shown in Figure \ref{fig: clas results car-simulations}. Here, the car should globally avoid an obstacle (``O'') and eventually reach a desired (``D'') region, i.e., $\phi_1 = \mathcal{G}(\lnot O) \wedge \mathcal{F}(D)$.
To do so, we start with a very coarse abstraction and iteratively refine the discretization, which overall takes approximately $36$ minutes.
From Figure \ref{fig: clas results car-simulations} we observe that the refinement procedure preserves the initial coarseness of the discretization for regions with small uncertainty on the satisfaction probability, whereas the critical regions, such as the corners around the obstacle, are further refined. Hereby, not only the lower bounds improve (the orange regions turn green), but also the upper bounds improve (red regions turn orange or green), 
and the controller strategies based on non-informative lower bounds are possibly updated (red regions turn green).
\ifthenelse{\boolean{arxiv}}{Then, for the final abstraction, which is roughly one-fifth of the number of states (approximately $10,000$) a standard uniform discretization of the domain using the finest grid of the final abstraction would contain, we obtain tight satisfaction probabilities.}{}

\subsubsection{Control Synthesis for Complex Specifications}\label{sec: complex ltl}
To show that our framework can handle complex specifications, we use four nonlinear 2-D datasets generated by the nonlinear system considered in  \cite{jackson2021strategy}, and perform control synthesis given the same labeling of the domain and complex LTLf specification as in \cite{jackson2021strategy}, i.e., $\phi_2 = \mathcal{G}(\lnot O) \wedge \mathcal{F}(D1) \wedge \mathcal{F}(D2)$. 
The iterative abstracting and control synthesis procedure takes approximately 65 seconds, and the final abstraction consists of $1,500$ states.  
Figure \ref{fig: scaling}a shows that, although we assume a noisier dataset, we are able to compute informative satisfying probability bounds that resemble the result in \cite{jackson2021strategy}.

\subsubsection{Scalability: high-dimensional and complex NN-structures}\label{sec: case studies - scalability}
Last, we test the scalability of our framework on a 5-D system with NNs of 5 hidden layers with 100 neurons per layer. Here, we consider the reach-avoid specification $\phi_1$ with the labeling of the space as shown in Figure \ref{fig: scaling}b. We again start with a coarse abstraction and iteratively improve the abstraction, which overall takes approximately 2 hours, from which  
1.5 hours for generating the IMDP abstractions. 
The final abstraction consists of approximately $15,000$ 
states. Figure \ref{fig: scaling}b shows that we are able to guarantee for a large part of the domain that the initial states almost always
(green regions) or almost never
(red regions) satisfy the specification using complex controller strategies as indicated by the simulated paths and corresponding actions. 

\begin{figure}
    \centering
    \vspace{1mm}
    \includegraphics[width=1.0\linewidth]{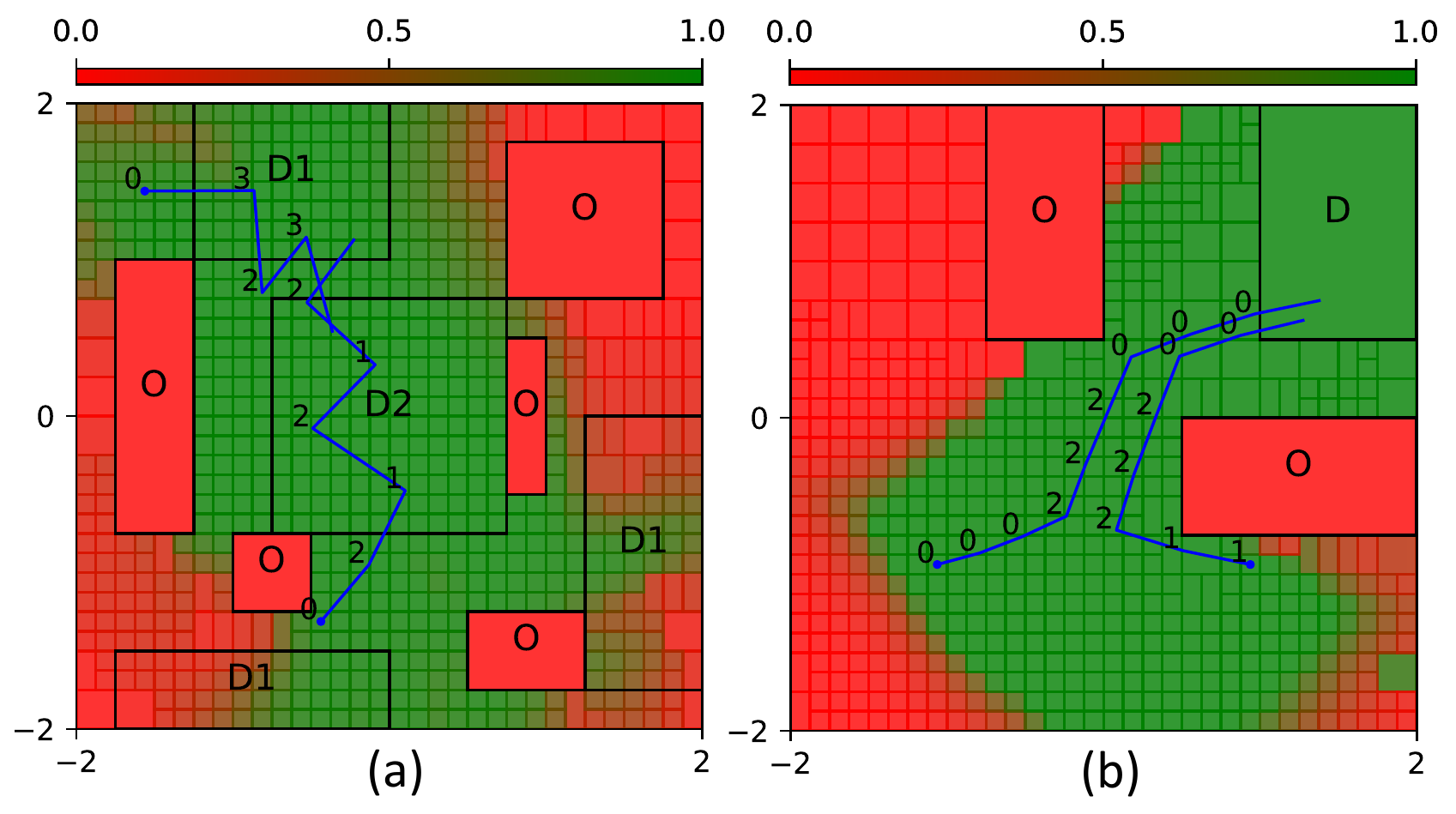}
        \caption{Region labeling and lower satisfaction probability bounds ($\check{p}$) of the initial state for Experiment 2 (a) and 3 (b). 
        }
        \label{fig: scaling}
        \vspace{-5mm}
\end{figure}

\section{Conclusion}
We introduced a formal control synthesis framework for a stochastic NNDMs with LTLf specifications.
We showed that in practice the abstraction can be constructed very
efficiently and developed an iterative refinement scheme to efficiently minimize
the number of states of this discretization-based method. 
By experiments on various datasets, we showed that our framework enables efficient control synthesis of provably correct strategies for complex NNDM of several input dimensions on nontrivial control tasks. 
\ifthenelse{\boolean{arxiv}}{In the future, we plan to extend our framework to NNDMs driven by Recurrent Neural Networks.}{}





\bibliographystyle{IEEEtran}
\bibliography{references}

\ifthenelse{\boolean{arxiv}}{
\appendices
\section{Supplementary Material}\label{sec: supplementary material}

\subsection{Efficient Computation of Transition Probabilities}
\subsubsection{Grouping Procedure}\label{sec: grouping procedure}
Recall that we discretized $\domain$ using a grid in the transformed space induced by transformation matrix $\T$. For $q\in \Q$, let vectors $\check{h}$ and $\hat{h}$ define the vertices of the hyper-rectangular overapproximation of $\rectregion{\conv{H}}$, i.e., $\rect{h} = \rectregion{\conv{H}}$, such that $\check{h}^{(l)}\leq \hat{h}^{(l)}$ for $l\in\{0,\hdots,n\}$, as in Proposition \ref{lemma: one-step crown set}. Then, for the $l$-th dimension, we denote the set of discretization intervals by
\begin{equation}
\begin{aligned}
    \Delta^{[l]} = \{[\check{z}^{(l)}, \hat{z}^{(l)}] \mid \ &\rect{z}=\image{q},\check{z}^{(l)}\leq \hat{z}^{(l)}, \\
    &q\in \Q\}
\end{aligned}
\end{equation}
Based on $\rect{h}$, we split $\Delta^{[l]}$ in three subsets, 
\begin{align}
    \Delta^{[l]}_L &= \{r \in \Delta^{[l]} \mid z<h, \forall z\in r, \forall h\in [\check{h}^{(l)},\hat{h}^{(l)}]\}\nonumber \\
    \Delta^{[l]}_O &= \{r \in \Delta^{[l]} \mid r \cup [\check{h}^{(l)},\hat{h}^{(l)}]\neq\emptyset\}\label{eq: delta_overlap S} \\
    \Delta^{[l]}_U &= \{r \in \Delta^{[l]} \mid z>h, \forall z\in r, \forall h\in [\check{h}^{(l)},\hat{h}^{(l)}]\},\nonumber
\end{align}
and, we create a new set of intervals $\Delta_{new}^{[l]}$ consisting of the union of all intervals $\Delta_L^{[l]}$, the union of all intervals in $\Delta_U^{[l]}$, and  the intervals of $\Delta_O^{[l]}$, i.e., 
\begin{equation*}
    \Delta_{new}^{[l]} = \left(\cup_{r\in \Delta^{[l]}_L}r\right) \cup \Delta_O^{[l]} \cup \left( \cup_{r\in \Delta^{[l]}_L}r \right)
\end{equation*}
By construction, it is guaranteed that each interval in $\Delta_{new}^{[l]}$ has a unique value for $z_{min}^{(l)}$ and $z_{max}^{(l)}$ as defined in Theorem \ref{theor: sorting}. 
Next, we want to define a function $G:\Q\rightarrow \Q$ that given a region $q\in \Q$ returns a grouping of the regions in $\Q$ such that all $q'\in g\in G$ have equal $z_{min}$ and $z_{max}$ as defined in Theorem \ref{theor: sorting}. 
First, we introduce a labeling function $L_{\Delta}:\mathbb{R}\times \mathbb{R}\times \mathbb{N}\rightarrow 2$, such that
\begin{equation*}
\begin{aligned}
    &L_{\Delta}(r,\delta,l) = \begin{cases}\top &\text{if}\ r\cup \delta = r \wedge
    \delta \cup [\check{h}^{(l)},\hat{h}^{(l)}]\neq \emptyset\\
    \top &\text{if} \ r \subseteq \delta \\
    \bot &\text{else}.
    \end{cases}
\end{aligned}
\end{equation*}
Then, grouping $G:\Q\rightarrow \Q$ is defined as
\begin{equation*}
\begin{aligned}
    G(q) = \{Q_{r}\subseteq\Q\mid&\forall q'\in Q_{r}, \delta\in \Delta_{new}^{[l]}, \forall l\in \{0,\hdots,n\},\\
    &L_{\Delta}(q',\delta,l) \}
\end{aligned}
\end{equation*}
where vectors $\check{h}$ and $\hat{h}$ define the vertices of the hyper-rectangular overapproximation of $\rectregion{\conv{H}}$, i.e., $\rect{h} = \rectregion{\conv{H}}$, for region $q$ as in Proposition \ref{lemma: one-step crown set}. The labeling function returns true ($\top$) for two intervals ($r$ and $\delta$) that have the same relative position w.r.t. $\conv{H}$ for dimension $l$, where the possible positions for each dimension $l$ are defined by $\Delta_{new}^{[l]}$.
Note that even for a nonuniform discretization, the labeling function defines a complete and unique mapping function, i.e., for all $q\in \Q$ it holds that $\cup_{g\in G(q)}g = \Q$ and $\forall q'\in \Q,\exists ! g\in G(q)$ s.t. $q'\in g$.

\subsubsection{Algorithm}\label{sec: algorithm}
Our approach to compute $\check{P}$ and $\hat{P}$ as described in Section \ref{sec: efficient computation} is summarized in Algorithm \ref{al:1}.

\begin{algorithm}
\DontPrintSemicolon
 \For{$q\in \Q, a\in A$}{
 Define $H$ as in Proposition \ref{lemma: one-step crown set} and construct $\rectregion{\conv{H}}$.\;
 Find grouping $G(q)$ as described in Appendix \ref{sec: grouping procedure}.\;
 \For{$Q_r \in G(q)$}{
    Pick an element $q_{ref}$ from $Q_r$.\;
    Find $z_{max}$ that upper-bounds $\tk{q_{ref}}$ for $x\in q$ as in Theorem \ref{theor: sorting}.\;
    \eIf{$q_{ref}\cup \rectregion{\conv{H}}=\emptyset$}{
    Find $\hat{v}, \check{v}$ s.t. $\rect{v}=\rectregion{q_{ref}}$\;
    Find $z_{min}$ that lower-bounds $\tk{q_{ref}}$ for $x\in q$ as in Theorem \ref{theor: sorting}.\;
    \For{$q'\in Q_r$}{
    $\check{P}(q,a,q') = g(z_{min})$ \;
    $\hat{P}(q,a,q') = g(z_{max})$ \;
    }}{
    \For{$q'\in Q_r$}{
    Find $\hat{v}, \check{v}$ s.t. $\rect{v}=\rectregion{q_{ref}}$\;
    $z_{min}=\argmin{z\in V}\g$\;
    $\check{P}(q,a,q') = g(z_{min})$\;
    $\hat{P}(q,a,q') = g(z_{max})$ \;
    }
}}}
\caption{
Computation of
$\check{P}$ and $\hat{P}$.
}\label{al:1}
\end{algorithm}

\begin{table*}[h]
\begin{tabular}{@{}llllllllll@{}} \toprule 
& $\phi$& $n$ & States &$\domain$ &$\cov{v}$ & $|A|$ & \#L & \#N/L& Proj.\\\cmidrule{1-10}
\textbf{1} &$\phi_1$ & 2& $[x_1,x_2]$&$[-2,2]\times [-2,2]$ &$0.2I$& 4 & 3 & 20& $[x_1,x_2]$ \\
\textbf{2} &$\phi_2$& 3& $[x, y, \phi]$ & $[0\times 10]\times[0\times 2]\times[-0.5\times 0.5]$ &$diag([0.1,0.1,0.01])$& 7 & 4 & 50& $[x, y]$\\
\textbf{3} &$\phi_2$&5&$[x_1,x_2,x_3,x_4,x_5]$ &$[-2,2]\times[-2,2]\times[-0.4,0.4]\times[-0.4,0.4]\times[-0.4,0.4]$ &$0.05I$&  3 & 5 & 100&$[x_1,x_2]$\\
 \bottomrule
\end{tabular}
\caption{Overview of the LTLf specification, number of dimensions of the system, state space representation, compact set, noise characteristics, number of modes, number of layers and neurons per layer of the NNs considered per experiment. Last, the states spanning the projected space as shown in the figures \ref{fig: clas results car-simulations} and \ref{fig: scaling} (Proj.), are specified.} 
\label{tab}
\end{table*}

\subsection{Refinement Procedure}\label{sec: refinement procedure}
Here, we denote by $\vertices{\rect{v}}$ the set of vertices of hyper-rectangle $\rect{v}$, i.e., 
\begin{equation}\label{eq: definition vertices}
\begin{aligned}
    \vertices{\rect{v}} = &\{([v^{(1)}, v^{(2)},\hdots, v^{(n)}]^T \mid \\
    &v^{(l)}\in \{\check{v},\hat{v}\}, l\in\{0,\hdots,n\}\}
\end{aligned}
\end{equation}

Let the vertices of hyper-rectangle $\image{q}$ be defined by vectors $\check{v}$ and $\hat{v}$ (i.e., $\rect{v}=\rectregion{q}$), and $\lbf_a,\ubf_a$ be the linear functions that bound $\nnT$ for $z\in \image{q}$ as defined by Proposition \ref{lemma: one-step crown set}, and $\check{A}_a,\hat{A}_a\in\mathbb{R}^{n\times n}$ be the matrices that capture the linear transformation of $\lbf_a$ and $\ubf_a$, respectively. 

To measure the relative deformation of the edges of $\image{q}$, 
we first denote the set of vertices of $\image{q}$ by $\vertices{q}$ as in \eqref{eq: definition vertices}. Further, we define a mapping $M_{\xi}:\mathbb{R}^{n}\rightarrow \{0,\hdots,n\}$ that returns the matching dimension(s) for two vectors $v,v'\in \mathbb{R}^n$, i.e.,
\begin{equation*}
    M_{\xi}(v,v') = l \iff v^{(l)}=v'^{(l)}
\end{equation*}
Note that, because $\image{q}$ is an axis-aligned hyper-rectangle, for all $v,v'\in \vertices{q}$ and $v\neq v'$, $M_{v}(v,v')$ is an unique and complete mapping. 
The maximum expansion of an edge that is described by vertices $v,v'\in \vertices{q}$ under transformation of $\check{A}_a$ and $\check{A}_a$ for all $a\in A$, which we refer to as $\xi_{max}(v,v')\in\mathbb{R}$, is then defined as
\begin{equation*}
    \xi_{max}(v,v') = \max_{a\in A}\left(
    \frac{\lVert\check{A}_a\check{v}-\check{A}_a\hat{v}\rVert}
    {\lVert\check{v}-\hat{v}\rVert}
    , 
    \frac{\lVert\hat{A}_a\check{v}-\hat{A}_a\hat{v}\rVert}
    {\lVert\check{v}-\hat{v}\rVert}
    ,
    \right)
\end{equation*}
Then, for $q\in \Q$ the dimension to refine, which we call $l_c\in\{0,\hdots,n\}$, is defined as the dimension corresponding to vectors $v,v'\in\vertices{q}$ that the maximize $\xi_{max}(v,v')$, i.e.,
\begin{equation*}
    l_c = M_{\xi}\left(\argmax{v, v'\in \vertices{q}} \xi_{max}(v,v') \right)
\end{equation*}

\subsection{Case Studies}\label{sec: case studies}
The details on the system and NN-characteristics considered for each system are reported in Table \ref{tab}.

}{}





\end{document}